\newdefinition{definition}{Definition}
\newdefinition{example}{Example}
\newtheorem{theorem}{Theorem}
\newtheorem{lemma}{Lemma}
\newtheorem{corollary}{Corollary}
\newproof{proof}{Proof}
\let\set\mathbbm
\def\k{\set D}
\newcommand{\qk}{\set{K}}
\newcommand{\D}{x}
\newcommand{\alg}{\k[\D;\sigma,\delta]}
\newcommand{\qalg}{\qk[\D;\sigma,\delta]}
\newcommand{\lc}{\operatorname{lc}}
\newcommand{\gcrd}{\operatorname{gcrd}}
\newcommand{\syl}{\operatorname{Syl}}
\newcommand{\cont}{\operatorname{cont}}
\newcommand{\pquo}{\operatorname{pquo}}
\newcommand{\prem}{\operatorname{prem}}
\newcommand{\pres}{\operatorname{sres}}
\newcommand{\prs}[3]{(#1_i)_{i\in \{#2,\dots,#3\}}}
\newcommand{\cc}{\operatorname{tc}}
\journal{Journal of Symbolic Computation}
\begin{document}
\begin{frontmatter}
  \title{Improved Polynomial Remainder Sequences for Ore~Polynomials}

  \author{Maximilian Jaroschek\fnref{fn1}}
  \ead{mjarosch@risc.jku.at}
  \address{Research Institute for Symbolic Computation\\ Johannes Kepler University\\ A4040 Linz, Austria}
  \fntext[fn1]{Supported by the Austrian Science Fund (FWF) grant Y464-N18}

\begin{abstract}
Polynomial remainder sequences contain the intermediate results of the Euclidean algorithm when applied to (non-)commutative polynomials. The running time of the algorithm is dependent on the size of the coefficients of the remainders. Different ways have been studied to make these as small as possible. The subresultant sequence of two polynomials is a polynomial remainder sequence in which the size of the coefficients is optimal in the generic case, but when taking the input from applications, the coefficients are often larger than necessary. We generalize two improvements of the subresultant sequence to Ore polynomials and derive a new bound for the minimal coefficient size. Our approach also yields a new proof for the results in the commutative case, providing a new point of view on the origin of the extraneous factors of the coefficients.
\end{abstract}

\begin{keyword}
Ore polynomials \sep greatest common right divisor \sep polynomial remainder sequences \sep subresultants

11A05 \sep 68W30
\end{keyword}

\end{frontmatter}

\section{Introduction}

When given a system of differential equations, one might be interested in finding the common solutions of these equations. In order to do so, one can compute another differential equation whose solution space is the intersection of the solution spaces of the equations in the original system. One way to do this is to translate the equations into operators and use the Euclidean algorithm to compute their greatest common right divisor. The solution space of the greatest common right divisor then consists of the desired elements. 

Similarly, given a sequence of numbers $(t_n)_{n\in\{0,1,\dots\}}$ that satisfies two different recurrence equations, the Euclidean algorithm is used in applications to find a reasonable candidate for the least order equation of which $(t_n)_{n\in\{0,1,\dots\}}$ is a solution.

Carrying out Euclid's algorithm applied to two polynomials over a domain $\k$ usually requires~a~prediction of the denominators that might appear in the coefficients of the remainders in order to bypass costly computations in the quotient field of $\k$. While such a prediction can be done easily, the growth of the coefficients of the remainders can be tremendous, which might result in an unnecessary high running time. This can be avoided by dividing out possible content of the remainders to make their coefficients as small as possible. For commutative polynomials as well as for non-commutative operators, different ways have been extensively studied to find factors of the content in the sequence of remainders without computing the GCD of the coefficients of each element of the sequence. Most notably in this respect are subresultant sequences, where the growth of the coefficients can be reduced from exponential to linear in the number of reduction steps in the Euclidean algorithm. When taking generic, randomly 
generated input, the coefficient size in the subresultant sequence is usually optimal, but when taking the input from applications in e.g. combinatorics or physics, the remainders still have non-trivial content in many cases. 

For commutative polynomials, some ways are known to improve on subresultants. In this article we generalize two of these results to Ore polynomials and we also give a new proof for the commutative case that is based on the structure of subresultants as matrix determinants. Furthermore, we use these results to derive a new bound for the coefficient size of the content-free remainders.

In Section~\ref{prelimsec} the basic notions of Ore polynomial rings are stated. A precise definition and examples of polynomial remainder sequences are given in Section~\ref{prssec} and further details on the subresultant sequence are then presented in Section~\ref{sressec}. The main results of this article can be found in Sections~\ref{contentsec}~and~\ref{improvsec}, where we first describe how additional content in the subresultant sequence can emerge and then use these results to improve on the Euclidean algorithm and to get~a~new bound for the size of the coefficients.

\section{Preliminaries}
\label{prelimsec}

The algebraic framework for different kinds of operators that we consider here are Ore polynomial rings, which were introduced by \O ystein Ore in the 1930's. We provide an overview of some basic facts that suffice our needs and that can be found in \citet{ore} and \citet{bronstein}.
\begin{definition}
Let $\k$ be a commutative domain, $\k[\D]$ the set of univariate polynomials over $\k$ and let $\sigma\colon \k\rightarrow \k$ be an injective endomorphism.
\begin{enumerate}
 \item  A map $\delta\colon \k\rightarrow \k$ is called pseudo-derivation w.r.t. $\sigma$, if for any $a,b\in \k$ 
\begin{displaymath}
 \delta(a+b) = \delta(a) + \delta(b)\text{\quad and \quad} \delta(ab) = \sigma(a)\delta(b) + \delta(a)b.
\end{displaymath}
 \item Suppose that $\delta$ is a pseudo-derivation w.r.t. $\sigma$. We define the Ore polynomial ring $(\k[\D],+,\cdot)$ with componentwise addition and the unique distributive and associative extension of the multiplication rule
\begin{displaymath}
 \D a = \sigma(a)\D+\delta(a)\text{\qquad for any }a\in\k,
\end{displaymath}
 to arbitrary polynomials in $\k[\D]$. To clearly distinguish this ring from the standard polynomial ring over $\k$, we denote it by $\alg$.
\end{enumerate}
\end{definition}

 Elements of an Ore polynomial ring are called operators and are denoted by capital letters. We refer to the leading coefficient of an operator $A$ as $\lc(A)$, to the coefficient of $\D^0$ in $A$ as $\cc(A)$ and to the polynomial degree of $A$ in $x$ as the order $d_A$ of $A$. 

\begin{example}
\label{oreex} Commonly used Ore polynomial rings are:
 \begin{enumerate}
  \item $\k[\D]=\k[\D;1,0]$, the ring of commutative polynomials over $\k$.
  \item $\set C(y)[D;1,\frac{d}{dy}]$, the ring of linear ordinary differential operators.
  \item If $s_n\colon\set C(n)\rightarrow \set C(n)$ is the forward shift in $n$, i.e. $s_n(a(n)) = a(n+1)$, then $\set C(n)[S;s_n,0]$ is the ring of linear ordinary recurrence operators.
  \item If $\sigma\colon\set C(q)(y)\rightarrow \set C(q)(y)$ is the $q$-shift in $y$, i.e. $\sigma(a(y)) = a(qy)$, then $\set C(q)(y)[J;\sigma,\frac{d}{dy}]$ is the ring of Jackson's $q$-derivative operators.
 \end{enumerate}
\end{example}

In this article, we consider the following situation: Let $\k$ be a Euclidean domain with degree function $\deg$ and let $\alg$ be an Ore polynomial ring where $\sigma$ is an automorphism. For any operator~$A\in\alg$, we define $\lVert A\rVert$ to be the maximal coefficient degree of $A$. The content $\cont(A)$ of $A$ is the greatest common divisor of all the coefficients of $A$ and it is defined to be $\lc(A)$ if $\k$ is a field. It is possible to extend $\alg$ to an Ore polynomial ring over the quotient field~$\qk$ of $\k$ by setting $\sigma(a^{-1}) = \sigma(a)^{-1}$ and~$\delta(a/b) = (b\delta(a)-a\delta(b))/(b\sigma(b))$ for $a,b\in\k,$ $b\neq 0$ (see \cite{zli2}, Proposition 2.2.1). We will denote this ring by $\qalg$ without making it explicit that the automorphism and the pseudo-derivation are extensions of the functions used in $\alg$. It is well known that for any two operators $A,B\in\qalg$, there exists a greatest common right divisor (GCRD) and it can be made unique (up to units in $\k$) by 
setting $\gcrd(A,B)$ to a nonzero $\qk$-
left multiple of any GCRD of $A$ and~$B$ that has coefficients in $\k$ but does not have any content in $\k$. 

Throughout this article, we let $A,B,G\in\alg$, $B\neq 0$ be such that $d_A\geq d_B$ and~$G$ is the GCRD of $A$ and~$B$.

\begin{definition}
 For $a\in\k$ and~$n\in\set N$, $\sigma^n(a)$ is obtained by applying $n$ times $\sigma$ to $a$ and~$\sigma^{-n}(a):=(\sigma^{-1})^n(a)$, where $\sigma^{-1}$ is the inverse map of $\sigma$. The $n$th $\sigma$-factorial of $a\in\k$ is defined as the product \[a^{[n]} := \prod\limits_{i=0}^{n-1}{\sigma^i(a)}.\]
\end{definition}

\section{Polynomial Remainder Sequen\-ces for Ore Polynomials}
\label{prssec}

The greatest common right divisor of $A$ and~$B$ can be computed by using the Euclidean algorithm. If we multiply any intermediate result that appears during the execution of the  algorithm by an element of $\qk\setminus\{0\}$, the final output will be a $\qk$-left multiple of $G$. This amount of freedom allows us to optimize the running time by choosing these factors appropriately. In order to be able to formulate improvements of this kind, the notion of polynomial remainder sequences has been introduced. Each element of such a sequence corresponds to a remainder computed in one iteration of the Euclidean algorithm.
\begin{definition}
 Let $\prs{R}{0}{\ell+1}$ and~$\prs{Q}{1}{\ell}$ 
be sequences in $\qalg$, $\prs{d}{0}{\ell}$ a sequence in $\mathbb{N}$ and let $\prs{\alpha}{1}{\ell}$ and~$\prs{\beta}{1}{\ell}$ be sequences in $\qk$ such that
 \begin{alignat*}4
  & R_0    = A,\quad R_1 = B,\quad d_i    = d_{R_i},\\
  & \alpha_iR_{i-1}  = Q_iR_i + \beta_iR_{i+1},\quad d_{i+1} < d_i, 
 \end{alignat*}
and all $R_i$ are nonzero except for $R_{\ell+1}$. We call the sequence $\prs{R}{0}{\ell+1}$ a polynomial remainder sequence (PRS) of $A$ and~$B$. 
\end{definition}

A PRS of $A$ and~$B$ is uniquely determined by specifying the $\alpha_i$ and~$\beta_i$. Whenever we talk about~a~PRS $\prs{R}{0}{\ell+1}$, we allow ourselves to refer to the related sequences $\prs{Q}{1}{\ell}$, $\prs{d}{0}{\ell}$ etc. as in the above definition without explicitly introducing them. 

In order to efficiently compute $G$, one wants to make sure that all the remainders are elements of $\alg$ rather than $\qalg$. This can be achieved by choosing the $\alpha_i$ in a way such that the quotient of any two consecutive remainders has coefficients in $\k$. To this extent, for $1\leq i\leq\ell$ set $\alpha_i:=\lc(R_i)^{[d_{i-1}-d_i+1]}$ and division with remainder yields $Q_i$ and~$R_{i+1}$ in $\alg$ with:
\begin{alignat}2
\label{premformula}
 \alpha_i R_{i-1} = Q_i{R_i} + R_{i+1},\qquad d_{i+1}<d_i.
\end{alignat}
We call $\pquo(R_{i-1},R_i):=Q_i$ the pseudo-quotient of $R_{i-1}$ and~$R_i$ and~$\prem(R_{i-1},R_i):=R_{i+1}$ the pseudo-remainder of $R_{i-1}$ and~$R_i$.

The $\alpha_i$ are used to make sure that computations can be done in $\alg$ and the $\beta_i$ control the coefficient growth in a PRS. We want $\beta_i$ to contain as many factors of the content of $R_{i+1}$ as possible without much computational overhead needed to obtain these factors.

\begin{example}
\label{prsex}
 Set $\alpha_i =\lc(R_i)^{[d_{i-1}-d_i+1]}$ and
\begin{enumerate}
 \item $\beta_i = 1$. This is called the pseudo PRS of $A$ and~$B$. Here, no content will be divided out.
 \item $\beta_i = \cont(R_{i+1})$. This is called the primitive PRS of $A$ and~$B$. The coefficients of the remainders will be as small as possible, but it is necessary to compute the GCD of the coefficients of each remainder in order to get the $\beta_i$.
 \item \label{sresprs}The subresultant PRS of $A$ and~$B$ (see Section~\ref{sressec}) is given by
\begin{alignat*}2
 \beta_i & = \left\{\begin{array}{ll}
\displaystyle -\sigma(\psi_1)^{[d_0 - d_1]}, & \text{ if } i=1,\\
\displaystyle -\lc(R_{i-1})\sigma(\psi_i)^{[d_{i-1}-d_i]}, & \text{ if } 2\leq i\leq\ell,\end{array}\right.
\end{alignat*}
where
\[
\psi_i = \left\{\begin{array}{ll}
                 \displaystyle -1, & \text{ if } i = 1,\\
                 \displaystyle \frac{(-\lc(R_{i-1}))^{[d_{i-2}-d_{i-1}]}}{\sigma(\psi_{i-1})^{[d_{i-2}-d_{i-1}-1]}}, & \text{ if } 2\leq i\leq\ell.
                \end{array}\right.
\]
In this PRS, the content that is generated systematically by pseudo-remaindering will be cleared from the remainders.
\end{enumerate}
\end{example}

While in all of the above PRSs the remainders are elements of $\alg$, the degrees of the coefficients differ drastically, as illustrated in the following example. It can be shown that the degrees of the coefficients in the pseudo PRS grow exponentially with $i$, which renders this PRS practically useless. The growth in the subresultant and primitive PRS is linear in $i$.

\begin{example}
\label{mainex}
 Assume we are given a finite sequence of rational numbers that comes from a sequence $(t_n)_{n\in\{0,1,\dots\}}$ which admits a linear recurrence equation with polynomial coefficients. If the amount of data is sufficiently large, we are able to guess recurrence operators of some fixed order and maximal coefficient degree that annihilate $(t_n)_{n\in\{0,1,\dots\}}$, i.e. the operators applied to the sequence give zero. (For details on guessing and a Mathematica implementation of the method, see \cite{kauers}.)
 For example, consider 
 \[t_n = \sum_{k=0}^{n}{\binom{2n+4}{k}+(2n-k)! + k^3}.\]
 Given the first 300 terms of this sequence, we can find two operators $A$ and~$B$ in $\set Q[n][S;s_n,0]$ with $d_A=14$, $d_B=13$ and maximal coefficient degree $\lVert A\rVert=5$, $\lVert B\rVert=6$ resp. Both operators annihilate the given sequence, but none of them is of minimal order. To get an annihilating minimal order operator, we compute the GCRD of $A$ and~$B$ in $\set Q(n)[S;s_n,0]$. Table 1 shows the maximal coefficient degrees of the remainders for different PRSs of $A$ and~$B$.
\begin{center}
\begin{tabular}{l|c|c|c|c|c|c|c}
PRS & $R_2$ & $R_3$ & $R_4$ & $R_5$ & $R_6$ & $R_7$ & $R_8$\\
\hline
 pseudo & 11 & 22 & 49 & 114 & 271 & 650 & 1565\\
\hline
 subresultant & 11 & 16 & 21 & 26 & 31 & 36 & 41\\
\hline
 primitive & 9 & 12 & 15 & 18 & 21 & 24 & 21\\
\end{tabular}\\
\vspace{0.1cm}\scriptsize{Table 1: Maximal coefficient degrees for different PRSs.}
\end{center}
The example confirms that the degrees in the pseudo PRS grow exponentially, whereas the subresultant PRS and the primitive PRS show linear growth. At the same time, the degrees in the subresultant PRS are not as small as possible. This behavior is typical not only for this pair~$A$~and~$B$, but in general for operators coming from applications. For randomly generated operators, the subresultant PRS and the primitive PRS usually coincide. Our goal is to understand the difference between randomly generated input and the operators $A$ and~$B$ as above and to identify the source of some (and most often all) of the additional content in the subresultant PRS. To make use of this knowledge, we will then adjust the formulas for $\alpha_i$ and~$\beta_i$ from Example~\ref{prsex}.\ref{sresprs} so that we get a PRS with smaller degrees without having to compute the content of every remainder.
\end{example}

\section{Subresultant Theory for Ore Polynomials}
\label{sressec}

For commutative polynomials, the theory of subresultants was intensively studied by \cite{brown}, \cite{browntraub}, \cite{collins} and \cite{loos}. The main idea is to translate relations between the elements of a PRS like the B\'ezout relation or the (pseudo-)remainder formula into linear algebra. A central tool in this context is the Sylvester matrix, which, roughly speaking, contains the coefficients of all the monomial multiples of the input polynomials that are necessary to compute remainders of any possible degree. The remainders in the subresultant sequence turn out to be polynomials whose coefficients are determinants of certain submatrices of this matrix. \cite{zli} generalized these results to Ore polynomials. 

\begin{center}
\begin{tikzpicture}
{
\matrix (m) [matrix of math nodes,left delimiter=(\hspace{0.7cm},right delimiter={)}]
  {
   \lc(x^{d_B-1}A) & \cdots & \cdots & \cdots & \cdots & \cdots & \cdots & \cc(x^{d_B-1}A)      \\
                   & \text{\llap{$\ddots$}}\hspace{0.5cm} &        &        &        &        &        & \vdots \\
                   &        & \text{\llap{$\lc(A)$}}\hspace{0.3cm} & \cdots & \cdots & \cdots & \cdots & \cc(A)      \\ 
   \lc(x^{d_A-1}B) & \cdots & \cdots & \cdots & \cdots & \cdots & \cdots & \cc(x^{d_A-1}B)      \\
                   & \text{\llap{$\ddots$}}\hspace{0.5cm} &        &        &        &        &        & \vdots \\
                   &        & \text{\llap{$\ddots$}}\hspace{0.3cm} &        &        &        &        & \vdots \\
                   &        &        & \text{\llap{$\lc(B)$}}\hspace{0.3cm} & \cdots & \cdots & \cdots & \cc(B)      \\ 
  };

\fill[fill=gray!50, opacity=0.3 ] (-2.3,0.5) -- (-5.2,2.5) -- (4.5,2.5) -- (4.5,0.5) -- cycle ;
\fill[fill=gray!50, opacity=0.3 ] (-1.6,-2.5) -- (-5.2,0.5) -- (4.5,0.5) -- (4.5,-2.5) -- cycle ;

\draw [decorate,decoration={brace,amplitude=6pt,mirror}]
(5.0,0.45) -- (5.0,2.5) node [black,midway,xshift=0.6cm] 
{\footnotesize $d_B$};

\draw [decorate,decoration={brace,amplitude=6pt,mirror}]
(5.0,-2.5) -- (5.0,0.35) node [black,midway,xshift=0.6cm] 
{\footnotesize $d_A$};

}
\end{tikzpicture}\\
\vspace{0.1cm}\scriptsize{Figure 1: The form of the Sylvester matrix of $A$ and~$B$. Entries outside of the gray area are zero.}
\end{center}
The Sylvester matrix $\syl(A,B)$ is defined to be the matrix of size $(d_A+d_B)\times (d_A+d_B)$ with the following entries: If $1\leq i\leq d_B$ and~$1\leq j\leq d_A+d_B$, the entry in the $i$th row and~$j$th column is the $(d_A+d_B-j)$th coefficient of $\D^{d_B-i}A$. If $d_B+1\leq i\leq d_A+d_B$ and~$1\leq j\leq d_A+d_B$, the entry in the $i$th row and~$j$th column is the $(d_A+d_B-j)$th coefficient of $\D^{d_A-(i-d_B)}B$. 

For $i,j\in\set N$ with $0\leq j\leq i \leq d_B$, the matrix $\syl_{i,j}(A,B)$ is obtained from $\syl(A,B)$ by removing the rows $1$ to $i$, the rows $d_B+1$ to $d_B+i$, the columns $1$ to $i$ and the last $i+1$ columns except for the column $d_A+d_B-j$. 
\begin{center}
\begin{tikzpicture}
{
\matrix (m) [matrix of math nodes,left delimiter=(\hspace{0.7cm},right delimiter={)}]
  {
   \phantom{\lc(x^{d_B-1}A)} & \phantom{\Big(} & \phantom{\cdots} & \phantom{\cdots} & \phantom{\cdots} & \phantom{\cdots} & \phantom{\cdots} & \phantom{\cc(x^{d_B-1}A)}\hspace{0.5cm}\quad      \\
   \phantom{\lc(x^{d_B-1}A)}\\ 
  \phantom{\lc(x^{d_B-1}A)}\\
\phantom{\lc(x^{d_B-1}A)}\\
\phantom{\lc(x^{d_B-1}A)}\\
\phantom{\lc(x^{d_B-1}A)}\\
\phantom{\lc(x^{d_B-1}A)}\\
  };

\fill[fill=gray!50, opacity=0.3 ] (-2.3,0.5) -- (-5.2,2.5) -- (4.5,2.5) -- (4.5,0.5) -- cycle ;
\fill[fill=gray!50, opacity=0.3 ] (-1.6,-2.5) -- (-5.2,0.5) -- (4.5,0.5) -- (4.5,-2.5) -- cycle ;

\draw[fill=black] (-5.2,2.22) -- (4.6,2.22) ;
\draw[fill=black] (-5.2,1.72) -- (4.6,1.72) ;

\draw[fill=black] (-5.2,0.3) -- (4.6,0.3) ;
\draw[fill=black] (-5.2,-0.2) -- (4.6,-0.2) ;

\draw[fill=black] (-4.8,2.62) -- (-4.8,-2.5) ;
\draw[fill=black] (-4.2,2.62) -- (-4.2,-2.5) ;

\draw[fill=black] (4.25,2.62) -- (4.25,-2.6) ;
\draw[fill=black, decorate, decoration={border, segment length=10pt, angle=0} ] (3.65,2.6) -- (3.65,-2.6) ;
\draw[fill=black] (2.95,2.62) -- (2.95,-2.6) ;
}
\end{tikzpicture}\\
\vspace{0.1cm}\scriptsize{Figure 2: Sketch of $\syl_{2,1}(A,B)$. The lines indicate the removed rows and columns. The column under the dotted line is added again.}
\end{center}
\begin{definition}
 For $0\leq i\leq d_B$, the polynomial \[\pres_i(A,B) := \sum_{j=0}^{i}{\det(\syl_{i,j}(A,B))\D^j}\] is called the $i$th (polynomial) subresultant of $A$ and~$B$.
 If the order of $\pres_i(A,B)$ is strictly less than $i$, the $i$th subresultant of $A$ and~$B$ is called defective, otherwise it is called regular.
 The subresultant sequence of $A$ and~$B$ of the first kind is the subsequence of \[(A,B,\pres_{d_B-1}(A,B),\pres_{d_B-2}(A,B),\dots,\pres_{0}(A,B),0)\] that contains $A$, $B$, the trailing zero and all nonzero $\pres_{i}(A,B)$ for which $\pres_{i+1}(A,B)$ is regular.
\end{definition}

\begin{theorem}[\cite{zli}]
\label{zlithm}
 The polynomial remainder sequence given by $\alpha_i$ and~$\beta_i$ as in Example~\ref{prsex}.\ref{sresprs}, the subresultant PRS, is equal to the subresultant sequence of $A$ and~$B$ of the first kind.
\end{theorem}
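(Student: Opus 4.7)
The plan is to prove by induction on the index $i$ that the remainders $R_i$ generated by the recurrence with $\alpha_i$ and $\beta_i$ from Example~\ref{prsex}.\ref{sresprs} coincide (up to a unit of $\k$) with the regular subresultants $\pres_{d_i}(A,B)$, and that the skipped indices are exactly those where $\pres_k(A,B)$ is defective. The induction hypothesis carries two pieces of data: the identification $R_j \doteq \pres_{d_j}(A,B)$ for $j \le i$, and a closed form expressing $\psi_i$ as a ratio of $\sigma$-factorials of leading coefficients of the previously constructed $R_j$.

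For the base case, note that $A = \pres_{d_A}(A,B)$ and $B = \pres_{d_B}(A,B)$ by direct inspection of the Sylvester matrix. Computing $\alpha_1 R_0 = Q_1 R_1 + R_2'$ with $\alpha_1 = \lc(B)^{[d_0-d_1+1]}$ and expanding the row-reduction on $\syl(A,B)$ needed to cancel the top $d_0 - d_1$ rows, one recognizes the resulting coefficients of $R_2'$ as the determinants $\det(\syl_{d_2,j}(A,B))$ multiplied by a common scalar that equals precisely $-\sigma(-1)^{[d_0-d_1]} = (-1)^{d_0-d_1+1} = \beta_1$; dividing yields $R_2 \doteq \pres_{d_2}(A,B)$.

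For the inductive step I would invoke the Ore subresultant structure: pseudo-remaindering one regular subresultant against the previous regular one produces, after determinantal expansion, a scalar multiple of the next regular subresultant, the scalar being a product of leading coefficients to $\sigma$-factorial exponents. Concretely, expanding $\alpha_i R_{i-1} = Q_i R_i + \beta_i R_{i+1}$ column by column with the inductive identification and comparing with the determinantal formula for $\pres_{d_{i+1}}(A,B)$ forces the common scalar on the right-hand side to be $-\lc(R_{i-1})\sigma(\psi_i)^{[d_{i-1}-d_i]}$. The recursive formula for $\psi_i$ is then read off by cancelling $\sigma$-twisted leading-coefficient factors between consecutive steps, which is the Ore analogue of the Brown--Traub identity.

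The main obstacle is the noncommutative bookkeeping: each left multiplication by $\D$ in a row of $\syl(A,B)$ twists the coefficients by $\sigma$ and introduces $\delta$-terms, and after $d_{i-1} - d_i + 1$ reduction steps one obtains iterated $\sigma$-applications that assemble into the $\sigma$-factorials $\lc(R_i)^{[d_{i-1}-d_i+1]}$ in $\alpha_i$. Proving that the ratios defining $\psi_i$ are exact divisions in $\k$ (not merely in $\qk$) and that the $\delta$-contributions cancel against the off-diagonal terms of the Sylvester minors is the technical core of the argument. In the commutative specialization $\sigma = \operatorname{id}$, $\delta = 0$, all $\sigma$-factorials collapse to ordinary powers and one recovers the classical Collins--Brown--Traub recurrence, providing a useful sanity check on the identities to be verified.
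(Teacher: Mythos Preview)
The paper does not prove this theorem; it is stated as a result of \cite{zli} and no argument is given beyond the citation. So there is no ``paper's own proof'' to compare your proposal against, and what you have written should be assessed on its own merits.

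As a sketch it has the right architecture (induction on $i$, identifying $R_i$ with the regular subresultant of index $d_{i-1}-1$), but the inductive step as you describe it is essentially circular. You write that ``pseudo-remaindering one regular subresultant against the previous regular one produces, after determinantal expansion, a scalar multiple of the next regular subresultant, the scalar being a product of leading coefficients to $\sigma$-factorial exponents,'' and then that ``comparing with the determinantal formula for $\pres_{d_{i+1}}(A,B)$ forces the common scalar'' to be the displayed $\beta_i$. But this is precisely the content of the theorem: the whole difficulty is to establish the determinantal identity linking $\pres_{d_{i-1}-1}$, $\pres_{d_{i}-1}$ and $\pres_{d_{i+1}-1}$ with exactly those $\sigma$-factorial factors, including in the defective (non-normal) case. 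In Li's paper this is the Ore analogue of the fundamental theorem of subresultants, and it is proved by an explicit row-reduction argument on the matrices $\syl_{i,j}(A,B)$ that tracks the $\sigma$-shifts column by column; you cannot simply ``invoke'' it. You correctly flag that showing the $\psi_i$ recursion gives elements of $\k$ (and not merely of $\qk$) is part of the technical core, but you do not indicate how you would prove it; this is not a bookkeeping detail but the place where the argument actually lives.

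A minor point on the base case: under the paper's conventions $\pres_i(A,B)$ is only defined for $0\le i\le d_B$, so ``$A=\pres_{d_A}(A,B)$'' is not meaningful. The subresultant sequence of the first kind starts with $A,B$ by fiat, and the first nontrivial identification to establish is $R_2=\pres_{d_1-1}(A,B)$ (up to sign).
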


\section{Identifying Content of Polynomial Subresultants}
\label{contentsec}

The representation of subresultants in terms of determinants of the matrices $\syl_{i,j}(A,B)$ makes it possible to identify content by exploiting the special form of these matrices as well as the correspondence between rows of the Sylvester matrix and monomial multiples of $A$ and~$B$. For the case of commutative polynomials, some results are known for detecting such additional content. We generalize two results to the Ore setting. The first (Theorem~\ref{simpleimprov}) is a generalization of an observation mentioned in \cite{brown}, which carries over quite easily to the Ore case. The second (Theorem~\ref{mainthm}) usually performs better in terms of coefficient size of the remainders, but a heuristic argument is necessary to use it algorithmically (see Section~\ref{improvsec}). 

\begin{theorem}
\label{simpleimprov}
 With $t:= \gcd(\sigma^{d_B-1}(\lc(A)),\sigma^{d_A-1}(\lc(B)))$ and~$\gamma_{i} := \sigma^{-i}(t)$ for $0\leq i\leq d_B-1$, we get: $\gamma_i\mid\cont(\pres_i(A,B)).$
\end{theorem}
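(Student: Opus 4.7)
The plan is to expose the first column of $\syl_{i,j}(A,B)$ as containing only two nonzero entries, namely shifts of $\lc(A)$ and $\lc(B)$, and to expand the determinant along it. I would first locate the leading coefficients inside $\syl(A,B)$: the $k$-th row of the $A$-block (for $1\le k\le d_B$) holds the coefficients of $\D^{d_B-k}A$, a polynomial of degree $d_A+d_B-k$, so its leading coefficient $\sigma^{d_B-k}(\lc(A))$ sits in column $k$ and every entry strictly to the left is zero. The same positional shift holds for the $B$-block: the $k$-th row of the $B$-block has leading coefficient $\sigma^{d_A-k}(\lc(B))$ in column $k$, with zeros to its left.

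Next, I would track what survives after deleting rows $1,\dots,i$ and rows $d_B+1,\dots,d_B+i$ together with columns $1,\dots,i$. In the $A$-block the first surviving row is the original row $i+1$; its entry in what is now the first column (the original column $i+1$) is $\sigma^{d_B-i-1}(\lc(A))$, while every surviving row below it carries a zero there because its leading coefficient is located further to the right. The mirror argument in the $B$-block produces $\sigma^{d_A-i-1}(\lc(B))$ as the only other nonzero entry in the first column of $\syl_{i,j}(A,B)$.

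Expanding along this first column then yields
\[
  \det(\syl_{i,j}(A,B)) = \pm\sigma^{d_B-i-1}(\lc(A))\,M_A \;\pm\; \sigma^{d_A-i-1}(\lc(B))\,M_B,
\]
for some minors $M_A,M_B\in\k$. Hence $\gcd\bigl(\sigma^{d_B-i-1}(\lc(A)),\sigma^{d_A-i-1}(\lc(B))\bigr)$ divides $\det(\syl_{i,j}(A,B))$ for every admissible $j$, and therefore divides every coefficient of $\pres_i(A,B)$, hence its content. To finish, I would note that since $\sigma$ is an automorphism of the Euclidean domain $\k$, it sends units to units and respects divisibility, so it maps gcds to gcds (up to units); in particular
\[
  \sigma^{-i}(t) = \gcd\bigl(\sigma^{d_B-i-1}(\lc(A)),\sigma^{d_A-i-1}(\lc(B))\bigr)
\]
up to a unit, giving the claimed divisibility $\gamma_i\mid\cont(\pres_i(A,B))$.

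The computational core is routine; the only mildly delicate point is verifying the final gcd--automorphism compatibility, but this is immediate from the automorphism's preservation of units and divisibility and will be dispatched in a single line.
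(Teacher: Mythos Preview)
Your argument is correct and is essentially the same as the paper's: both identify that the first column of $\syl_{i,j}(A,B)$ contains only the two entries $\sigma^{d_B-1-i}(\lc(A))$ and $\sigma^{d_A-1-i}(\lc(B))$ and then apply Laplace expansion along it. Your write-up is more explicit about the bookkeeping (locating the nonzero entries and justifying the gcd--automorphism compatibility), but the underlying proof is the same one-line cofactor expansion.
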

\begin{proof}
 Let $i$ be fixed. The coefficients of $\pres_i(A,B)$ are the determinants of the matrices $\syl_{i,j}(A,B)$ for $0\leq j\leq i$. The first column of all of these matrices is
\[(\sigma^{d_B-1-i}(\lc(A)),0,\dots,0,\sigma^{d_A-1-i}(\lc(B)),0,\dots,0)^\mathsf{T}.\]
Laplace expansion along this column proves the claim.\qed
\end{proof}

Not all of the subresultants of $A$ and~$B$ are in the subresultant PRS of $A$ and~$B$. To make use of Theorem~\ref{simpleimprov} for a new PRS, we need a minor specialisation of the statement:

\begin{corollary}
\label{cor1}
  Let $\prs{R}{0}{\ell+1}$ be the subresultant PRS of $A$ and~$B$ (not necessarily normal). If we choose \[t=\gcd(\sigma^{d_B-1}(\lc(A)),\sigma^{d_A-1}(\lc(B))),\quad \gamma_2 = \sigma^{-d_B+1}(t)\text{ and }\gamma_i=\sigma^{d_{i-2}-d_{i-1}}(\gamma_{i-1})\text{ for } 2<i\leq\ell,\] then $\gamma_i\mid\cont(R_i)$ for $2\leq i\leq\ell$.
\end{corollary}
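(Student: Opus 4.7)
The plan is to unroll the recursion defining $\gamma_i$ into a closed form and then identify each $R_i$ with a specific subresultant of $A$ and $B$, so that Theorem~\ref{simpleimprov} applies directly.

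First I would prove by induction on $i$ that $\gamma_i = \sigma^{-(d_{i-1}-1)}(t)$ for $2 \le i \le \ell$. The base case follows from the definition of $\gamma_2$ together with $d_1 = d_B$. The inductive step collapses using
\[
\gamma_i = \sigma^{d_{i-2}-d_{i-1}}(\gamma_{i-1}) = \sigma^{d_{i-2}-d_{i-1}-(d_{i-2}-1)}(t) = \sigma^{-(d_{i-1}-1)}(t).
\]

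Next I would identify the subresultant realizing each $R_i$. By Theorem~\ref{zlithm}, $(R_i)$ is the subresultant sequence of $A$ and $B$ of the first kind, so $R_i = \pres_{j_i}(A,B)$ where $j_i$ is the largest index below $j_{i-1}$ (with $j_1 := d_B$) such that $\pres_{j_i}(A,B) \neq 0$ and $\pres_{j_i+1}(A,B)$ is regular. I would prove by induction on $i$ that $j_i = d_{i-1}-1$. For the base case, $\pres_{d_B}(A,B)$ is regular (which is visible directly from the Sylvester matrix) and $\pres_{d_B-1}(A,B)=R_2$ is nonzero since $\ell \ge 2$, giving $j_2 = d_B - 1$. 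For the inductive step, $\pres_{d_{i-1}}(A,B)$ is a regular subresultant of order $d_{i-1}$ (it is the unique regular subresultant of that order, proportional to $R_{i-1}$), so $\pres_{d_{i-1}-1}(A,B)$ satisfies the eligibility condition; conversely, for any index $j$ with $d_{i-1}-1 < j < j_{i-1}$, the successor $\pres_{j+1}(A,B)$ is either zero or a defective subresultant of order $d_{i-1}$, and hence not regular, so no larger $j$ qualifies. Therefore $R_i = \pres_{d_{i-1}-1}(A,B)$.

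Applying Theorem~\ref{simpleimprov} at index $d_{i-1}-1$, which lies in $[0, d_B-1]$ because degrees strictly decrease, yields $\sigma^{-(d_{i-1}-1)}(t) \mid \cont(\pres_{d_{i-1}-1}(A,B)) = \cont(R_i)$, which together with the closed form from the first step is precisely $\gamma_i \mid \cont(R_i)$.

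The main obstacle I expect is the identification $R_i = \pres_{d_{i-1}-1}(A,B)$ in the non-normal case: it rests on the Ore analogue of the classical block structure of subresultants (each realized degree in the PRS corresponds to exactly one regular subresultant, with a single defective subresultant or zeros filling each gap), which is part of the subresultant theory underlying Theorem~\ref{zlithm}.
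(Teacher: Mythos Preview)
Your proposal is correct and follows essentially the same route as the paper's proof: both identify $R_i$ with $\pres_{d_{i-1}-1}(A,B)$ via the subresultant block structure (the paper cites \cite{zli} for this), apply Theorem~\ref{simpleimprov} at that index, and then match the resulting $\sigma^{-(d_{i-1}-1)}(t)$ with $\gamma_i$. You simply spell out explicitly what the paper compresses into ``it is easy to see'' (the inductive closed form for $\gamma_i$) and into the block-structure citation (the identification $j_i=d_{i-1}-1$).
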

\begin{proof}
 Suppose $R_i$ is the $j$th subresultant of $A$ and~$B$. Then, by the definition of the subresultant sequence of the first kind and Theorem~\ref{zlithm}, the $(j+1)$st subresultant of $A$ and~$B$ is regular. Because of this and the subresultant block structure (see \cite{zli}), $R_{i-1}$ is of order $j+1$ and so $j$ is equal to $d_{i-1}-1$. By Theorem~\ref{simpleimprov}, the content of $R_i$ is divisible by $\sigma^{-d_{i-1}+1}(t)$. 
 It is easy to see that $\sigma^{-d_{i-1}+1}(t)$ is equal to $\gamma_i$.\qed
\end{proof}

In the commutative case, a second source of additional content was determined, although this result is not widely known. The following theorem can be found in \cite{knuth}:

\begin{theorem}
\label{knuththm}
 Let $A,B\in \k[\D]$ be such that the subresultant PRS of $A$ and~$B$ is normal, i.e. $d_{i-1} = d_i+1$ for $1\leq i \leq \ell$, and let $G$ be the GCD of $A$ and~$B$. Then $\lc(G)^{2(i-1)}\mid \cont(R_i)$ for $2\leq i\leq \ell$.
\end{theorem}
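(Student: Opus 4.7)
My plan is to prove the stronger factorization
\[
R_i = \lc(G)^{2(i-1)} \, G \, R'_i \qquad \text{for } 0 \leq i \leq \ell,
\]
where $(R'_i)$ denotes the subresultant PRS of $A' := A/G$ and $B' := B/G$. Since $R'_i, G \in \k[\D]$, this identity immediately forces every coefficient of $R_i$ to be divisible by $\lc(G)^{2(i-1)}$, from which $\lc(G)^{2(i-1)} \mid \cont(R_i)$ follows.

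As groundwork I would first check that the normality assumption transfers from $(A,B)$ to $(A',B')$: once the tentative factorization $R_i = c_i\, G\, R'_i$ is in place, $\deg R_i = \deg G + \deg R'_i$, so consecutive degree drops of one for $(R_i)$ force consecutive degree drops of one for $(R'_i)$. The technical engine is a compatibility lemma for pseudo-division: for $R = c_R\, G\, R'$ and $S = c_S\, G\, S'$ in $\k[\D]$ with $c_R, c_S \in \k$, multiplying the identity $\lc(S')^{d_R - d_S + 1}\, R' = \pquo(R',S')\, S' + \prem(R',S')$ by $c_R\, c_S^{d_R - d_S + 1}\, \lc(G)^{d_R - d_S + 1}\, G$ and invoking uniqueness of pseudo-division yields
\[
\prem(R, S) = c_R \, c_S^{d_R-d_S+1}\, \lc(G)^{d_R - d_S + 1}\, G\, \prem(R', S').
\]

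Applying this lemma with $R = R_{i-1}$, $S = R_i$ (so $c_R = c_{i-1}$, $c_S = c_i$, and $d_R - d_S = 1$ by normality) and dividing by $\beta_i$ gives $R_{i+1} = (c_{i-1}\, c_i^2\, \lc(G)^2\, \beta'_i / \beta_i)\, G\, R'_{i+1}$. Substituting the normal-case formula $\beta_i = \lc(R_{i-1})^2 = c_{i-1}^2 \lc(G)^2\, \beta'_i$ for $i \geq 2$, together with $\beta_1 = \beta'_1 = 1$ for the initial step, reduces this to the clean recursion $c_{i+1} = c_i^2 / c_{i-1}$ ($i \geq 2$), with initial values $c_0 = c_1 = 1$ and $c_2 = \lc(G)^2$. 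Its closed form $c_i = \lc(G)^{2(i-1)}$ falls out immediately and gives the desired factorization.

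The main obstacle I expect is the careful bookkeeping in the compatibility lemma, in particular tracking the scalar prefactors $c_R, c_S$ through the pseudo-division and verifying that $\beta_i$ divides the numerator exactly so that $c_{i+1}$ actually lies in $\k$ at every step. Conceptually, however, pseudo-division is ``linear'' in the common factor $G$: once the $G$-contribution is pulled out cleanly, the normal-case $\beta_i$ is itself built only from $\lc(G)$ and $c_{i-1}$, producing a second-order linear recurrence whose solution is precisely the claimed power of $\lc(G)$.
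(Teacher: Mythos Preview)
Your argument is correct and is precisely the classical approach that the paper attributes to Knuth (see the paragraph following Example~\ref{knuthex}). One cosmetic slip: the factorisation $R_i=\lc(G)^{2(i-1)}GR'_i$ fails for $i=0$ (there $c_0=1$, not $\lc(G)^{-2}$); the valid range is $i\ge 1$, which is all you need.

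The paper, however, does not reproduce Knuth's proof. Its own proof of Theorem~\ref{knuththm} is the commutative specialisation of Theorem~\ref{mainthm}: each coefficient of $\pres_i(A,B)$ is the determinant of a submatrix of the Sylvester matrix, and an inductive Laplace/Cramer argument shows that every such minor is divisible by a product of $\Delta-1=d_A+d_B-2i-1$ essential parts of the ideal generated by $A$ and $B$. In the commutative case each essential part equals $\lc(G)$ (Gau\ss' Lemma, as noted just before Theorem~\ref{mainthm}), and in the normal case the index bookkeeping gives exactly the exponent $2(i-1)$.

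Your route proves the sharper identity $R_i=\lc(G)^{2(i-1)}GR'_i$, not merely a divisibility statement, and is entirely elementary: it only manipulates the pseudo-division recursion. The paper's determinantal route does not yield that exact factorisation, but it makes explicit \emph{where} the content originates (as common factors of Sylvester minors) and, crucially, survives the passage to Ore polynomials. There your compatibility lemma breaks down, since the cofactors $A/G$ and $B/G$ need not lie in $\alg$, so one cannot even form the auxiliary sequence $(R'_i)$ over~$\k$; this is exactly the obstruction the paper points out when motivating its new proof.
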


A generalization of Theorem~\ref{knuththm} to Ore polynomials is not straightforward, as Example~\ref{knuthex} shows.

\begin{example}[Example~\ref{mainex} cont.]
 \label{knuthex}
If we take $A$ and~$B$ as in Example~\ref{mainex}, then the leading coefficient of the GCRD of $A$ and~$B$ is $(n+9)p(n)$,
where $p(n)$ is a polynomial of degree 17. The subresultant PRS of $A$ and~$B$ turns out to be normal and~$R_2$ is of order $d_2 = 12$. By Theorem~\ref{knuththm}, if the polynomials were elements of $\k[\D]$, $\cont(R_2)$ would be divisible by $\lc(G)^2$ and a naive translation of the theorem to the non-commutative case suggests divisibility by a polynomial of degree at least~36. The (monic) content of $R_2$, however, is only $(n+16)(n+17) $, which is contained in, but not equal to, $\sigma^{7}(\lc(G))^{[2]}$.
\end{example}

Again in the commutative case, let $Q_A$, $Q_B \in\k[\D]$ be such that $A=Q_AG$ and $B=Q_BG$. \cite{knuth} proves Theorem~\ref{knuththm} by showing that if $\prs{R}{0}{\ell+1}$ is the subresultant PRS~of~$A$~and~$B$ and~$\prs{\tilde{R}}{0}{\ell+1}$ is the subresultant PRS of $Q_A$, $Q_B$, then $Q_i = \lc(G)^{2(i-1)}\tilde{R}_i$. This approach is problematic for Ore polynomials, because there the $Q_i$'s and the $\tilde{R}_i$'s have coefficients in $\qk$ and not necessarily in $\k$. This means that even after showing that a quotient $Q_i$~is~a $\k$-left multiple of some subresultant $\tilde{R}_i$ of $Q_A$ and~$Q_B$, the left factor and the denominators in the coefficients of $\tilde{R}_i$ might not be coprime and thus lead to cancellation. Therefore we will not only describe why in the non-commutative case only some factors of $\lc(G)$ appear as content, but we also present a new proof of Theorem~\ref{knuththm} that makes it more explicit where the additional content comes from. Moreover, we won't require the 
remainder sequence to be normal. 

In $\k[x]$, if $A$ is a multiple of the primitive polynomial $G$, then their quotient will always have coefficients in $\k$, and therefore, the leading coefficient of $A$ contains all the factors of the leading coefficient of $G$. For Ore polynomials, this is not necessarily true, since the quotient of $A$ and $G$ might be an element of $\qalg\setminus\alg$. Still, different left multiples of $G$ in $\alg$ may share some common factors in their leading coefficients, as described in Lemma \ref{lcthm}.

\begin{lemma}
\label{lcthm}
 Let $d_T\in\set N$ be fixed, let $\mathcal{I}\vartriangleleft\alg$  be a left ideal and let $T$ be any element of $\mathcal{I}$ of order $d_T$ such that, among all the operators of order $d_T$ in $\mathcal{I}$, its leading coefficient~$t$ is minimal with respect to the degree. Then $t$ is independent of the choice of $T$ (up to multiplication by units in $\set D$) and for any $L\in \mathcal{I}$ with $d_L\leq d_T$ we have $\sigma^{d_L-d_T}(t)\mid\lc(L)$. 
\end{lemma}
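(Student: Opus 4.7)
The plan is to attach to $\mathcal{I}$ the set of leading coefficients of its order-$d_T$ elements, show that this set (together with $0$) forms an ideal of $\k$, and then use the fact that $\k$ is a PID to read off both the uniqueness of $t$ and the divisibility claim.

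Concretely, I would define
\[ J := \{\lc(L) : L\in\mathcal{I},\ d_L = d_T\} \cup \{0\} \subseteq \k \]
and verify that $J$ is an ideal of $\k$. For additivity: given $L_1,L_2\in\mathcal{I}$ both of order $d_T$, the sum $L_1+L_2\in\mathcal{I}$ has order at most $d_T$, and either the leading coefficients cancel (so the contribution to $J$ is $0$) or $L_1+L_2$ still has order $d_T$ with leading coefficient $\lc(L_1)+\lc(L_2)\in J$. For closure under multiplication by $a\in\k$, the operator $aL\in\mathcal{I}$ has leading coefficient $a\lc(L)\in J$. Note that $t\in J$ is nonzero, so $J$ is a nonzero ideal.

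Since $\k$ is Euclidean, hence a PID, $J=(t^*)$ for some $t^*\in\k$. Because $t\in J$, $t^*$ divides $t$; conversely, every nonzero element of $J$ has degree at least $\deg(t)$ by the minimality hypothesis on $T$, so $\deg(t^*)\geq\deg(t)$, forcing $t$ and $t^*$ to be associates. This gives the uniqueness of $t$ up to units. For the divisibility claim, given $L\in\mathcal{I}$ with $d_L\leq d_T$, set $L':=\D^{d_T-d_L}L\in\mathcal{I}$. Iterating $\D a = \sigma(a)\D+\delta(a)$ shows $d_{L'}=d_T$ and $\lc(L') = \sigma^{d_T-d_L}(\lc(L))\in J$, hence $t\mid\sigma^{d_T-d_L}(\lc(L))$. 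Applying the automorphism $\sigma^{d_L-d_T}$ to both sides yields $\sigma^{d_L-d_T}(t)\mid\lc(L)$, which is exactly the claim. I don't anticipate any real obstacle here; the only step requiring the standing hypothesis that $\sigma$ is an automorphism (rather than merely injective) is the appearance of the inverse shift $\sigma^{d_L-d_T}$ in the conclusion.
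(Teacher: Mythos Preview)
Your argument is correct. Both your proof and the paper's hinge on the same key maneuver---lifting $L$ to order $d_T$ via $L'=\D^{d_T-d_L}L$ so that $\lc(L')=\sigma^{d_T-d_L}(\lc(L))$---but they differ in how the Euclidean structure of $\k$ is invoked. The paper argues by direct contradiction: if $t\nmid\lc(L')$, Euclidean division gives $\lc(L')=qt+r$ with $0\neq r$ of smaller degree, and then $L'-qT\in\mathcal{I}$ has order $d_T$ and leading coefficient $r$, contradicting the minimality of $\deg(t)$. You instead package the same idea abstractly by observing that the set $J$ of order-$d_T$ leading coefficients (together with $0$) is an ideal of the PID $\k$, so it is principal and its generator must be an associate of $t$. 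Your route is slightly more conceptual and makes the uniqueness statement immediate; the paper's is more hands-on and avoids naming $J$ explicitly. Either way the final application of $\sigma^{d_L-d_T}$ requires $\sigma$ to be an automorphism, as you note.
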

\begin{proof}
 Assume there are $T,L\in \mathcal{I}$ for which the claim $\sigma^{d_L-d_T}(t)\mid\lc(L)$ does not hold.
 We let $L' =\D^{d_T-d_L}L$ and get $\lc(L') = \sigma^{d_T-d_L}(\lc(L))$, thus $t\nmid\lc(L')$ by assumption.
 Division with remainder yields nonzero $q,r\in \k$ such that \[\lc(L') = qt+r,\quad \deg(r)<\deg(t).\]
 Hence the operator $L'-qT$ is an element of $\mathcal{I}$ whose leading coefficient has degree less than $\deg(t)$. This contradicts the choice of $T$.

\noindent For the uniqueness, let $T'\in \mathcal{I}$ be any other operator of order $d_T$ with minimal leading \mbox{coefficient} degree. By what was just shown above, we get $\lc(T')\mid t$ and $t\mid\lc(T')$, so $t$ and $\lc(T')$ are associates.
\qed
\end{proof}

 \begin{definition}
  Consider $\mathcal{I}$, $T$ and $t$ from Lemma~\ref{lcthm}. The shift $\sigma^{-d_T}(t)$ of the leading coefficient of~$T$ is called the essential part of $\mathcal{I}$ at order~$d_T$. If there is no operator in $\mathcal{I}$ for some order $n$, the essential part of $\mathcal{I}$ at order $n$ is defined to be $1$.
 \end{definition}

Let $L\in\set C[y][D;1,\frac{d}{dy}]$ and $\mathcal{I}=\mathcal{I}'\cap\set C[y][D;1,\frac{d}{dy}]$ where $\mathcal{I}'\vartriangleleft\set  C(y)[D;1,\frac{d}{dy}]$ is the left ideal generated by~$L$. We give an informal explanation of essential parts of $\mathcal{I}$ in terms of solutions of~$L$, i.e. functions that are annihilated by $L$. Any non-removable singularity of a solution of $L$ corresponds to a root of the leading coefficient of $L$, but not for any root of $\lc(L)$ there has to be a solution with a non-removable singularity at that point. Any solution of $L$ is also a solution of every operator in~$\mathcal{I}$ and it can happen that there are nonzero $\qk$-left multiples of $L$ in $\mathcal{I}$ that have strictly smaller leading coefficient degree than $L$. If such a \textit{desingularized} operator exists, it means that some of the roots of $\lc(L)$ can be removed by multiplying $L$ with another operator from the left. These removable roots are called the \textit{apparent singularities} of 
$L$. It is shown in \cite{mythesis} that there exists a unique minimal (w.r.t.\ degree) essential part of $\mathcal{I}$ that appears in the essential parts of $\mathcal{I}$ at every order greater than $d_L$. This minimal essential part of $\mathcal{I}$ is a polynomial whose roots are exactly the non-apparent singularities of $L$, and it turns out that for each root of the essential part of $\mathcal{I}$, there is at least one solution of $L$ that does not admit an analytic continuation at that point. A more detailed description of desingularization and apparent singularities of differential equations can be found in \cite{ince}. Further references and recent results on desingularization of Ore operators can be found in \cite{chen}.

Note that for commutative polynomials, by Gau\ss' Lemma, the essential part of a nonzero ideal at any order is equal to the leading coefficient of the primitive greatest common divisor of the ideal elements. 

For the remaining part of this article, let $\mathcal{I}\vartriangleleft\alg$ be the left ideal generaed by $A$ and~$B$. We formulate our Ore generalization of Theorem~\ref{knuththm}, where now some of the essential parts of $\mathcal{I}$ play the role of the leading coefficient of the GCRD of $A$ and $B$.

\begin{theorem}
\label{mainthm}
 Let $i\in\{0,\dots,d_B-1\}$ and~$\Delta:=d_A+d_B-2i$. If $t_k$ is the essential part of $\mathcal{I}$ at order~$k$ for $i< k\leq \Delta+i-1$, then
\[\left(\prod_{k=i+1}^{\Delta+i-1}t_k\right)\mid\cont(\pres_i(A,B)).\]
\end{theorem}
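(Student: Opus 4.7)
My plan is to work with the determinantal formula $\pres_i(A,B)=\sum_{j=0}^i\det(\syl_{i,j}(A,B))x^j$ and show that each $\det(\syl_{i,j}(A,B))\in\k$ is divisible by $\prod_{k=i+1}^{\Delta+i-1}t_k$. The key observation is that every row of $\syl_{i,j}(A,B)$ is the coefficient vector of an operator in the left ideal $\mathcal{I}$, so Lemma~\ref{lcthm} applies to the leading coefficients of these rows.

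First I would identify the $\Delta$ rows of $\syl_{i,j}(A,B)$ with the operators $x^mA$ ($0\le m\le d_B-i-1$) and $x^mB$ ($0\le m\le d_A-i-1$). Their orders form the multiset $\{d_B,\dots,d_A-1\}\cup\{d_A,d_A,\dots,d_A+d_B-i-1,d_A+d_B-i-1\}$: single rows at orders in $[d_B,d_A-1]$ and paired rows at orders in $[d_A,d_A+d_B-i-1]$. After reordering rows by decreasing order, I would reduce each same-order pair (over $\qk$) to one row of that order plus a new row of strictly smaller order, iterating until all rows have distinct orders $i,i+1,\dots,d_A+d_B-i-1$.

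Since the columns of $\syl_{i,j}(A,B)$ are ordered by decreasing $x$-exponent (with $x^j$ last) and an operator of order $k$ has no coefficient at $x^{k'}$ for $k'>k$, the reduced matrix is upper triangular with the same determinant as the original. The determinant is thus the product of its diagonal entries: for each $k\in\{i+1,\dots,d_A+d_B-i-1\}$, the entry at order $k$ is the leading coefficient of an operator $T^{(k)}\in\mathcal{I}$ (after clearing denominators from the row operations) of order $k$, and the bottom-right entry is the $x^j$-coefficient of an order-$i$ operator. Applying Lemma~\ref{lcthm} to each $T^{(k)}$, this leading coefficient is divisible in $\k$ by $\sigma^{k}(t_k)$ up to a unit, and a careful bookkeeping of the $\sigma$-shifts, using the $\sigma^{-k}$ normalisation built into the definition of $t_k$, should yield divisibility of the full determinant by $\prod_{k=i+1}^{\Delta+i-1}t_k$.

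The main obstacle I foresee is precisely this $\sigma$-shift bookkeeping. A naive combination of the divisibilities produces $\prod_{k}\sigma^{k}(t_k)$, which in the Ore setting is a different element of $\k$ than $\prod_{k}t_k$, so the shifts need to telescope correctly, likely via a judicious choice of the representatives $T^{(k)}$ realising the essential parts. A related subtlety is that the row reductions pass through $\qk$, so one must verify that the denominators introduced cancel against the numerators in a way that preserves the $\k$-divisibility of the final determinant; reconciling both points is the technical heart of the proof.
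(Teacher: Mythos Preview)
Your proposal correctly identifies its own weak point, and that weak point is fatal as stated. Row-reducing $\syl_{i,j}(A,B)$ over $\qk$ to upper-triangular form does give $\det(\syl_{i,j})=\pm\prod_k p_k$ with pivots $p_k\in\qk$, but Lemma~\ref{lcthm} only speaks about operators in $\mathcal{I}\subset\alg$. To apply it you must clear denominators in each reduced row, replacing the pivot $p_k$ by $c_k p_k=\lc(T^{(k)})$ for some $c_k\in\k$ and $T^{(k)}\in\mathcal{I}$. The lemma then gives only $t_k\mid c_k p_k$, hence $\prod_k t_k\mid\bigl(\prod_k c_k\bigr)\det(\syl_{i,j})$. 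Nothing in your outline controls the $c_k$: they come from whatever denominators the Euclidean-style row reductions happened to introduce, and there is no reason $\prod_k c_k$ should be coprime to $\prod_k t_k$. The divisibility $\prod_k t_k\mid\det(\syl_{i,j})$ simply does not follow from this decomposition. (Your $\sigma$-shift worry, by contrast, is not a separate obstacle: once one reads $t_k$ as the minimal leading coefficient at order $k$---which is how the paper's proof uses the symbol---the factors line up with no telescoping needed.)

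The paper avoids the denominator problem by a different organisation. Rather than one global triangularisation, it first Laplace-expands along the last column and then inducts on the size $n$ of the minors built from the first $n$ columns and any $n$ rows of $\syl_{i,j}$. The inductive step is the device you are missing: given an $(n{+}1)\times(n{+}1)$ such minor $M$ with $\det(M)\neq0$, Cramer's rule for $M^{\mathsf T}v=(0,\ldots,0,1)^{\mathsf T}$ writes each $v_j$ as $p_j/\det(M)$ with $p_j$ an $n\times n$ minor, hence divisible by $t_{\Delta+i-n}\cdots t_{\Delta+i-1}$ by induction. Rescaling $v$ by $\det(M)/(t_{\Delta+i-n}\cdots t_{\Delta+i-1})$ therefore lands in $\k^{n+1}$, and the corresponding $\k$-linear combination of the row-operators is an element of $\mathcal{I}$ (in $\alg$, not merely $\qalg$) of order $\Delta+i-(n{+}1)$ whose leading coefficient is \emph{exactly} $\det(M)/(t_{\Delta+i-n}\cdots t_{\Delta+i-1})\in\k$. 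Lemma~\ref{lcthm} now applies cleanly and contributes the next factor $t_{\Delta+i-(n+1)}$. The point is that Cramer's rule manufactures, at each stage, an integral operator with a prescribed leading coefficient---precisely the control over denominators that a bare row-reduction argument lacks.
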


\begin{proof}
 For any $j\in\{0,\dots,i\}$, $\syl_{i,j}(A,B)$ is of size $\Delta\times\Delta $ and if the last column is removed, the resulting matrix does not depend on $j$ anymore. For $n\in \{1,\dots,\Delta-1\}$, let $\mathcal{M}_{i,n}$ be the set of all $n\times n$ matrices obtained by removing the last $\Delta-n$ columns and any $\Delta-n$ rows from $\syl_{i,j}(A,B)$. The $j$th coefficient of $\pres_i(A,B)$ is the determinant of $\syl_{i,j}(A,B)$ and Laplace expansion along the last column shows that it is a $\k$-linear combination of the elements of $\mathcal{M}_{i,\Delta-1}$. By induction on~$n$ we show that the determinant of any element of $\mathcal{M}_{i,n}$ is divisible by $t_{\Delta+i-n}t_{\Delta+i-(n-1)}\dots  t_{\Delta+i-1}$. The theorem is then proven by setting $n=\Delta-1$.

 For $n=1$, the only entry in a matrix in $\mathcal{M}_{i,1}$ is either zero or the leading coefficient of a monomial left multiple of $A$ or $B$ of order $\Delta+i-1$, so the claim follows from Lemma~\ref{lcthm}.

 Now suppose the claim is true for $1\leq n < \Delta-1$ and let $M$ be any element of $\mathcal{M}_{i,n+1}$. If the determinant of $M$ is zero, then there is nothing to show. Consider the case where $\det(M)\neq 0$. 
Then there is a $v\in\qk^{n+1}$ such that $M^\mathsf{T}v=(0,\dots,0,1)^\mathsf{T}$. By Cramer's rule, the $j$th component~$v_j$ of $v$ is of the form $p_j/\det(M)$ where $p_j\in\k$ is the determinant of some element of $\mathcal{M}_{i,n}$. By induction hypothesis it is divisible by $t_{\Delta+i-n}t_{\Delta+i-(n-1)}\dots t_{\Delta+i-1}$. 
 Every row in $M$ corresponds to an operator of the form $\D^kA$ or $\D^kB$ for $k\in\set N$, minus some of the lower order terms. For the $j$th row, $1\leq j\leq n+1$, we denote the corresponding operator by $L_j$. By the definition of $v$, the operator $\sum_{j=0}^{n+1}{v_jL_j}\in\qalg$ will have order $\Delta+i-(n+1)$ and leading coefficient $1$.
 So if we set \[v':=\frac{\det(M)}{t_{\Delta+i-n}t_{\Delta+i-(n-1)}\dots t_{\Delta+i-1}}v \in\k^{n+1}\] and~$L=\sum_{j=0}^{n+1}{v'_jL_j}$, then $L$ is an element in $\mathcal{I}$ of order $\Delta+i-(n+1)$ and its leading coefficient is $\det(M)/(t_{\Delta+i-n}t_{\Delta+i-(n-1)}\dots t_{\Delta+i-1})\in\k$. Lemma~\ref{lcthm} yields that $\lc(L)$ is divisible by $t_{\Delta+i-(n+1)}$, so we get in total $t_{\Delta+i-(n+1)}t_{\Delta+i-n}\dots t_{\Delta+i-1}\mid\det(M)$.\qed
\end{proof}

In practice, the essential parts of $\mathcal{I}$ will most likely be the same at every order $n$ with $d_G\leq n\leq d_A+d_B$. In that case, Theorem~\ref{mainthm} is equivalent to the following simplification, where only the essential part of $\mathcal{I}$ at order $d_A+d_B$ needs to be known.

\begin{corollary}
\label{oneespart}
 Let $i\in\{0,\dots,d_B-1\}$ and~$\Delta:=d_A+d_B-2i$. If $t$ is the essential part of $\mathcal{I}$ at order~$d_A+d_B$, then
\[\sigma^{i+1}(t)^{[\Delta-1]}\mid\cont(\pres_i(A,B)).\]
\end{corollary}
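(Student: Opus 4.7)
The plan is to obtain Corollary~\ref{oneespart} as an immediate specialization of Theorem~\ref{mainthm}; essentially no new machinery is required, and all the work is in tracking the $\sigma$-shifts. First I would invoke the informal hypothesis stated just before the corollary: the essential parts of $\mathcal{I}$ agree, up to units in $\k$, at every order $n$ with $d_G\le n\le d_A+d_B$. Since every nonzero element of $\mathcal{I}$ is a right multiple of $G$ in $\qalg$, the ideal contains no operator of order $<d_G$, so its essential part there is~$1$ by definition; and for $i$ small enough that $\pres_i(A,B)=0$ the claim is vacuous. Hence we may assume that the orders $k\in\{i+1,\ldots,\Delta+i-1\}$ that appear in Theorem~\ref{mainthm} all lie in the range where the common essential part~$t$ occurs, so $t_k=t$ for each such~$k$.

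Next I would substitute this into the conclusion of Theorem~\ref{mainthm}, keeping careful track of the shift introduced by Lemma~\ref{lcthm}. That lemma says $\sigma^{d_L-d_T}(\lc(T))\mid\lc(L)$; applied in the inductive step of Theorem~\ref{mainthm} with $d_L=d_T=k$, and writing $\lc(T)=\sigma^k(t_k)$ in terms of the essential part $t_k$, it produces the factor $\sigma^k(t_k)$ --- not $t_k$ itself --- in the divisibility. Setting each $t_k$ to~$t$, the bound from Theorem~\ref{mainthm} becomes
\[
\prod_{k=i+1}^{\Delta+i-1}\sigma^k(t)\;\big|\;\cont(\pres_i(A,B)).
\]
Re-indexing by $j=k-i-1$ rewrites the left-hand side as $\prod_{j=0}^{\Delta-2}\sigma^{j}(\sigma^{i+1}(t))$, which by the definition $a^{[n]}=\prod_{j=0}^{n-1}\sigma^{j}(a)$ is precisely $\sigma^{i+1}(t)^{[\Delta-1]}$, as claimed.

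The only point where one might slip is the $\sigma$-exponent bookkeeping: one must notice that, after collapsing all the essential parts to the single polynomial~$t$, the $\sigma$-factorial on the right is based at $\sigma^{i+1}(t)$ rather than at~$t$, and that this entire shift is traceable to Lemma~\ref{lcthm}. Beyond this, no further algebraic content is needed and the corollary reduces cleanly to Theorem~\ref{mainthm}.
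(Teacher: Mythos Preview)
Your argument proves the corollary only under the extra assumption that the essential parts of $\mathcal{I}$ at all orders between $d_G$ and $d_A+d_B$ coincide. That sentence before the corollary is motivation, not a hypothesis: the corollary is stated, and in the paper proved, unconditionally. So invoking ``$t_k=t$ for each such $k$'' is a genuine gap.

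The step you are missing is the use of Lemma~\ref{lcthm} to compare essential parts at different orders. If $t$ is the essential part of $\mathcal{I}$ at order $d_A+d_B$, then the minimal operator $T$ at that order has $\lc(T)=\sigma^{d_A+d_B}(t)$; applying Lemma~\ref{lcthm} with $d_T=d_A+d_B$ and $L$ a minimal operator at order $j$ gives $\sigma^{j}(t)\mid\lc(L)=\sigma^{j}(t_j)$, hence $t\mid t_j$, and therefore $\sigma^{j}(t)\mid\sigma^{j}(t_j)$. No equality of essential parts is needed: this divisibility, substituted into Theorem~\ref{mainthm}, already yields $\prod_{k=i+1}^{\Delta+i-1}\sigma^{k}(t)\mid\cont(\pres_i(A,B))$, and your re-indexing to $\sigma^{i+1}(t)^{[\Delta-1]}$ is then correct. (The case $i<d_G$ is handled, as you note, by $\pres_i(A,B)=0$.)

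Your bookkeeping of the $\sigma$-shifts is careful and accurate, including the observation that the factor produced in the proof of Theorem~\ref{mainthm} at order $k$ is really $\sigma^{k}(t_k)$; but you should replace the appeal to the ``informal hypothesis'' by the one-line application of Lemma~\ref{lcthm} above.
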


\begin{proof}
 According to Lemma~\ref{lcthm}, $\sigma^{j}(t)$ divides the essential part of $\mathcal{I}$ at order $j$ for any $d_G\leq j\leq d_A+d_B$. If $i<d_G$, then the $i$th subresultnat of $A$ and $B$ is zero. Otherwise, Theorem~\ref{mainthm} yields that $\cont(\pres_i(A,B))$ is divisible by 
 \[\sigma^{i+1}(t)\sigma^{i+1}(t)\dots \sigma^{\Delta+i-1}(t) = \sigma^{i+1}(t)^{[\Delta-1]}.\text{\rlap{$\hspace{4.3cm}\square$}}\]
\end{proof}

Like for Theorem~\ref{simpleimprov}, an adjustment of Corollary~\ref{oneespart} to the block structure of the subresultant sequence of the first kind is needed in order to construct a new PRS.

\begin{corollary}
\label{cor2}
 Let $\prs{R}{0}{\ell+1}$ be the subresultant PRS of $A$ and~$B$ (not necessarily normal) and let $t$ be the essential part of $\mathcal{I}$ at order $d_A+d_B$. If we set $\gamma_2 = \sigma^{d_B}(t)^{[d_A-d_B+1]}$ and \[\gamma_i=\sigma^{d_{i-1}}(t)^{[d_{i-2}-d_{i-1}]}\gamma_{i-1}\sigma^{d_A+d_B-d_{i-2}+1}(t)^{[d_{i-2}-d_{i-1}]}\text{ for } 2<i\leq\ell,\] then $\gamma_i\mid\cont(R_i)$ for $2\leq i\leq\ell$.
\end{corollary}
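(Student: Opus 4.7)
The plan is to reduce to Corollary~\ref{oneespart} exactly as Corollary~\ref{cor1} was deduced from Theorem~\ref{simpleimprov}, and then verify that the recursive definition of $\gamma_i$ is simply a repackaging of the $\sigma$-factorial that Corollary~\ref{oneespart} produces.

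First I would identify which subresultant each $R_i$ is. By Theorem~\ref{zlithm}, $\prs{R}{0}{\ell+1}$ is the subresultant sequence of the first kind of $A$ and $B$. As argued in the proof of Corollary~\ref{cor1}, the block structure of subresultants (see \cite{zli}) forces $R_i = \pres_{d_{i-1}-1}(A,B)$ for $2\leq i\leq \ell$. Applying Corollary~\ref{oneespart} with the index $d_{i-1}-1$ and $\Delta = d_A+d_B-2(d_{i-1}-1)$ therefore yields
\[\sigma^{d_{i-1}}(t)^{[d_A+d_B-2d_{i-1}+1]} \mid \cont(R_i).\]
Thus it suffices to prove by induction on $i$ that $\gamma_i$ equals this $\sigma$-factorial.

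The base case $i=2$ is immediate: $d_1=d_B$, so the displayed exponent is $d_A-d_B+1$ and the shift is $d_B$, which matches the given $\gamma_2$. For the inductive step, I would expand each of the three $\sigma$-factorials in the recursive formula using $\sigma^m(t)^{[n]} = \prod_{j=m}^{m+n-1}\sigma^j(t)$. Under the induction hypothesis, the three ranges of $j$ appearing in
\[\sigma^{d_{i-1}}(t)^{[d_{i-2}-d_{i-1}]}\,\gamma_{i-1}\,\sigma^{d_A+d_B-d_{i-2}+1}(t)^{[d_{i-2}-d_{i-1}]}\]
are respectively $[d_{i-1},d_{i-2}-1]$, $[d_{i-2},d_A+d_B-d_{i-2}]$, and $[d_A+d_B-d_{i-2}+1,d_A+d_B-d_{i-1}]$. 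These three intervals are disjoint and glue to the single interval $[d_{i-1},d_A+d_B-d_{i-1}]$, which is exactly the product range of $\sigma^{d_{i-1}}(t)^{[d_A+d_B-2d_{i-1}+1]}$. This completes the induction.

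The only potential obstacle is keeping the index bookkeeping straight; in particular, the asymmetric-looking recursive formula in the corollary is designed precisely so that one extra $\sigma$-factorial of length $d_{i-2}-d_{i-1}$ is prepended at the low end and another of the same length is appended at the high end of the interval for $\gamma_{i-1}$. Once the ranges are written out explicitly there is nothing further to do, so no separate ``hard step'' appears beyond this routine verification.
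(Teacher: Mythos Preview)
Your proposal is correct and follows essentially the same approach as the paper: identify $R_i$ as the $(d_{i-1}-1)$st subresultant via the block structure, apply Corollary~\ref{oneespart}, and then verify that the recursive $\gamma_i$ equals $\sigma^{d_{i-1}}(t)^{[d_A+d_B-2d_{i-1}+1]}$. The paper dismisses the last step as ``simple hand calculation''; your explicit induction with the interval decomposition is exactly that calculation spelled out.
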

\begin{proof}
 Suppose $R_i$ is the $j$th subresultant of $A$ and~$B$. As in the proof of Corollary~\ref{cor1}, we have that~$j$ is equal to $d_{i-1}-1$. So by Corollary~\ref{oneespart}, the content of $R_i$ is divisible by $\sigma^{d_{i-1}}(t)^{[d_A+d_B-2d_{i-1}+1]}$. Simple hand calculation shows that this is equal to $\gamma_i$.\qed
\end{proof}

\section{Improved Polynomial Remainder Sequence}
\label{improvsec}

We now derive formulas for the $\alpha_i$ and~$\beta_i$ that take into account the potential additional content characterized by Theorems~\ref{simpleimprov} and~\ref{mainthm}. For this we need the following lemma:

\begin{lemma}
\label{quothm}
 For $\gamma_1,\gamma_2\in\qk$:
 $\pquo(\gamma_1A,\gamma_2B)\gamma_2 = \gamma^{\mathstrut}_1\gamma_2^{[d_A-d_B+1]}\pquo(A,B).$
\end{lemma}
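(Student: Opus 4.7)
The plan is to exploit the uniqueness of pseudo-division in $\qalg$: because $\qk$ is a field, the quotient and remainder in $\qalg$ are uniquely determined, so it suffices to construct an equation of the required shape and read off the pseudo-quotient.

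Concretely, set $Q := \pquo(A,B)$, $R := \prem(A,B)$, $\alpha := \lc(B)^{[d_A-d_B+1]}$, so that $\alpha A = Q B + R$ with $d_R < d_B$. Since $\lc(\gamma_2 B) = \gamma_2 \lc(B)$ and elements of $\qk$ commute pairwise, the scaling factor for the pair $(\gamma_1 A, \gamma_2 B)$ is $\alpha' = \gamma_2^{[d_A-d_B+1]}\alpha$. Left-multiplying $\alpha A = QB + R$ by the scalar $\gamma_1\gamma_2^{[d_A-d_B+1]} \in \qk$ and rearranging the commuting scalar factors yields
\[\alpha'(\gamma_1 A) \;=\; \gamma_1\gamma_2^{[d_A-d_B+1]}\, Q B \;+\; \gamma_1\gamma_2^{[d_A-d_B+1]}\, R.\]
On the other hand, the defining equation of pseudo-division applied to $(\gamma_1 A,\gamma_2 B)$ reads $\alpha'(\gamma_1 A) = \pquo(\gamma_1 A,\gamma_2 B)\,(\gamma_2 B) + \prem(\gamma_1 A,\gamma_2 B)$. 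Subtracting these two identities and using associativity to regroup $\pquo(\gamma_1 A,\gamma_2 B)(\gamma_2 B) = \bigl(\pquo(\gamma_1 A,\gamma_2 B)\gamma_2\bigr)B$, I would obtain
\[\bigl(\pquo(\gamma_1 A,\gamma_2 B)\,\gamma_2 - \gamma_1\gamma_2^{[d_A-d_B+1]}Q\bigr) B \;=\; \gamma_1\gamma_2^{[d_A-d_B+1]}R - \prem(\gamma_1 A,\gamma_2 B).\]
The right-hand side has order strictly less than $d_B$, while on the left $B \neq 0$ has order exactly $d_B$ and orders are additive in $\qalg$; hence the prefactor must vanish, which is the claim.

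The only real subtlety is that the indeterminate $x$ occurring inside $Q$ does not commute with the scalar $\gamma_2$, which is precisely why $\gamma_2$ must remain on the right of $\pquo(\gamma_1 A,\gamma_2 B)$ in the statement rather than being pulled across as a left scalar factor. Once the sides are tracked carefully through the computation, everything else reduces to manipulation of commuting scalars in $\qk$. I would also tacitly assume $\gamma_2 \neq 0$ so that $\gamma_2 B$ has order $d_B$ and pseudo-division is defined, and note that the case $\gamma_1 = 0$ makes both sides vanish trivially.
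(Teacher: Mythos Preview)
Your argument is correct. You use the uniqueness of division with remainder in $\qalg$ directly: after checking that $\lc(\gamma_2B)^{[d_A-d_B+1]}=\gamma_2^{[d_A-d_B+1]}\lc(B)^{[d_A-d_B+1]}$ (which holds because $\sigma$ is multiplicative and the scalars commute), you left-multiply the pseudo-remainder formula for $(A,B)$ by $\gamma_1\gamma_2^{[d_A-d_B+1]}$, compare with the pseudo-remainder formula for $(\gamma_1A,\gamma_2B)$, and conclude by additivity of order. The paper instead first establishes the companion identity $\prem(\gamma_1A,\gamma_2B)=\gamma_1\gamma_2^{[d_A-d_B+1]}\prem(A,B)$ via subresultant determinants, invoking Lemmas~2.2 and~2.3 of \cite{zli} (the $(d_B-1)$st subresultant equals the pseudo-remainder up to sign, and scaling the rows pulls out the factors $\gamma_1$ and $\sigma^i(\gamma_2)$), and only then substitutes into the pseudo-remainder formula to isolate the quotient. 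Your route is more elementary and self-contained, requiring no external references and in fact yielding the remainder identity as a free by-product; the paper's route has the virtue of staying within the determinantal framework that underlies the rest of Section~\ref{sressec} and Theorem~\ref{mainthm}.
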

\begin{proof}
 By Lemma 2.3 in \cite{zli}, the pseudo-remainder of $\gamma_1A$ and~$\gamma_2B$ is the $(d_B-1)$st subresultant of $\gamma_1A$ and~$\gamma_2B$ (up to sign). Consequently, its coefficients are determinants of submatrices of $\syl(\gamma_1A,\gamma_2B)$ that contain one row corresponding to the operator $\gamma_1A$ and~$d_A-d_B+1$ rows corresponding to operators of the form $\D^i\gamma_2B$, $0\leq i\leq d_A-d_B$. Thus, by Lemma~2.2 in~\cite{zli}, it follows that (up to sign)
\begin{equation}
\label{eq4}
\prem(\gamma_1A,\gamma_2B) = \gamma^{\mathstrut}_1\gamma_2^{[d_A-d_B+1]}\prem(A,B).
\end{equation} 
The pseudo-remainder formula \eqref{premformula} applied to  $\gamma_1 A$ and~$\gamma_2 B$ is\[\lc(\gamma_2B)^{[d_A-d_B+1]}\gamma_1A=\pquo(\gamma_1A,\gamma_2B)\gamma_2B + \prem(\gamma_1A,\gamma_2B).\]
Combining this with \eqref{eq4} and dividing the resulting equation by $ \gamma^{\mathstrut}_1\gamma_2^{[d_A-d_B+1]}$ from the left gives the desired result. \qed
\end{proof}

This now allows us to state $\alpha_i$ and~$\beta_i$ for improved polynomial remainder sequences:

\begin{theorem}
\label{newabs}
 Suppose $\prs{R}{0}{\ell+1}$ is the subresultant PRS of $A$ and~$B$ and~$\prs{\gamma}{0}{\ell+1}$ is any sequence in $\qk\setminus\{0\}$ with $\gamma_0=\gamma_1=1$. Set $\tilde{R}_i = \frac{1}{\gamma_i}R_i$. Then $\prs{\tilde{R}}{0}{\ell+1}$ is a PRS of $A$ and~$B$ with: 
\begin{alignat*}2
 \tilde{\alpha}_i & = \lc(\tilde{R}_i)^{[d_{i-1}-d_i+1]},\\
 \tilde{\beta}_i & = \left\{\begin{array}{ll}
\displaystyle -\sigma(\tilde{\psi}_1)^{[d_0 - d_1]}\gamma_2, & \text{ if } i=1,\\
\displaystyle \frac{-\lc(\tilde{R}_{i-1})\sigma(\tilde{\psi}_i)^{[d_{i-1}-d_i]}}{{\gamma_i}^{[d_{i-1}-d_i+1]}}\gamma_{i+1}, & \text{ if } 2\leq i\leq\ell,\end{array}\right.
\end{alignat*}
where
\[
\tilde{\psi}_i = \left\{\begin{array}{ll}
                 \displaystyle -1, & \text{ if } i = 1,\\
                 \displaystyle \frac{(-\gamma_{i-1}\lc(\tilde{R}_{i-1}))^{[d_{i-2}-d_{i-1}]}}{\sigma(\tilde{\psi}_{i-1})^{[d_{i-2}-d_{i-1}-1]}}, & \text{ if } 2\leq i\leq\ell.
                \end{array}\right.
\]
\end{theorem}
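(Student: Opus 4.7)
The plan is to verify directly that the sequence $\tilde R_i := R_i/\gamma_i$ satisfies the PRS relation
\[
  \tilde\alpha_i\tilde R_{i-1} = \tilde Q_i \tilde R_i + \tilde\beta_i \tilde R_{i+1}
\]
with the stated choices of $\tilde\alpha_i$ and $\tilde\beta_i$, for some quotients $\tilde Q_i\in\qalg$. Since division of an operator by a scalar in $\qk$ leaves the order unchanged, the degree sequence $(d_i)$ is the same for $(R_i)$ and $(\tilde R_i)$, so the relations $d_{i+1}<d_i$ are automatic.

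First, I would read off $\lc(\tilde R_i)=\lc(R_i)/\gamma_i$, which gives $\tilde\alpha_i = \lc(\tilde R_i)^{[d_{i-1}-d_i+1]} = \alpha_i/\gamma_i^{[d_{i-1}-d_i+1]}$ immediately. Next, to express the pseudo-quotient of the rescaled operators, I would invoke Lemma~\ref{quothm} with $A\leftarrow R_{i-1}$, $B\leftarrow R_i$, and the scalars $1/\gamma_{i-1}$, $1/\gamma_i$ respectively. This yields
\[
  \tilde Q_i := \pquo(\tilde R_{i-1},\tilde R_i) = \frac{\gamma_i}{\gamma_{i-1}\,\gamma_i^{[d_{i-1}-d_i+1]}}\,Q_i.
\]

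With this in hand, I would take the subresultant identity $\alpha_i R_{i-1}=Q_iR_i+\beta_iR_{i+1}$, substitute $R_j=\gamma_j\tilde R_j$ on both sides, and divide the whole equation by the $\qk$-scalar $\gamma_{i-1}\gamma_i^{[d_{i-1}-d_i+1]}$. The left hand side becomes $\tilde\alpha_i\tilde R_{i-1}$, the first term on the right becomes $\tilde Q_i\tilde R_i$ by the formula above, and the second term takes the shape $\tilde\beta_i\tilde R_{i+1}$ with
\[
  \tilde\beta_i = \frac{\beta_i\,\gamma_{i+1}}{\gamma_{i-1}\,\gamma_i^{[d_{i-1}-d_i+1]}}.
\]
This already shows that $(\tilde R_i)$ is a PRS of $A$ and $B$; what remains is to check that this $\tilde\beta_i$ agrees with the formula claimed in the theorem.

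The only substantive verification is $\tilde\psi_i=\psi_i$, which I would do by induction on $i$. The base case is $\tilde\psi_1=-1=\psi_1$. For the step, the key observation is $-\gamma_{i-1}\lc(\tilde R_{i-1}) = -\lc(R_{i-1})$, so substituting this into the recursion for $\tilde\psi_i$ and using the induction hypothesis gives exactly the recursion defining $\psi_i$. Once $\tilde\psi_i=\psi_i$ is in place, I would plug the explicit form of $\beta_i$ into the derived expression for $\tilde\beta_i$ and rewrite $\lc(R_{i-1})=\gamma_{i-1}\lc(\tilde R_{i-1})$. The factor $\gamma_{i-1}$ in the numerator then cancels the $\gamma_{i-1}$ in the denominator, producing precisely the formula stated for $\tilde\beta_i$ (and the case $i=1$ is just the specialisation $\gamma_0=\gamma_1=1$).

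No step is really hard; the main risk is bookkeeping with the $\sigma$-factorials when applying Lemma~\ref{quothm} and when simplifying $\beta_i\gamma_{i+1}/(\gamma_{i-1}\gamma_i^{[d_{i-1}-d_i+1]})$. The conceptual content is entirely contained in Lemma~\ref{quothm}, which ensures that pseudo-quotients transform predictably under left-scaling by elements of $\qk$; everything else is linear-in-the-$\gamma_i$ rescaling of the identity $\alpha_iR_{i-1}=Q_iR_i+\beta_iR_{i+1}$.
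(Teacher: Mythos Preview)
Your approach is the same as the paper's: substitute $R_j=\gamma_j\tilde R_j$ into $\alpha_iR_{i-1}=Q_iR_i+\beta_iR_{i+1}$, divide on the left by $\gamma_{i-1}\gamma_i^{[d_{i-1}-d_i+1]}$, use Lemma~\ref{quothm} to identify the quotient term, and verify $\tilde\psi_i=\psi_i$ from $\gamma_{i-1}\lc(\tilde R_{i-1})=\lc(R_{i-1})$.

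There is one non-commutativity slip worth flagging. Applying Lemma~\ref{quothm} with the scalars $1/\gamma_{i-1}$ and $1/\gamma_i$ gives $\tilde Q_i\cdot\gamma_i^{-1}=\gamma_{i-1}^{-1}\bigl(\gamma_i^{[d_{i-1}-d_i+1]}\bigr)^{-1}Q_i$, so after clearing $\gamma_i^{-1}$ on the right you obtain
\[
  \tilde Q_i=\frac{1}{\gamma_{i-1}\,\gamma_i^{[d_{i-1}-d_i+1]}}\,Q_i\,\gamma_i,
\]
with $\gamma_i$ sitting on the \emph{right} of $Q_i$, not the left as you wrote. In $\qalg$ a scalar does not commute past an operator, so your displayed formula for $\tilde Q_i$ is not correct as stated. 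The paper avoids this trap by never isolating $\tilde Q_i$: it keeps the relation in the form $Q_i\gamma_i=\gamma_{i-1}\gamma_i^{[d_{i-1}-d_i+1]}\tilde Q_i$, which is exactly what is needed to recognise the first right-hand term as $\tilde Q_i\tilde R_i$ after the left division. Your conclusion for that term is correct, but it follows from the right-sided version of the identity, not from the formula you wrote down. Everything else matches the paper.
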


\begin{proof}
 From the definition of $\tilde{R}_i$ and the equations
\begin{equation*}
 \alpha_i R_{i-1} = Q_iR_i + \beta_i R_{i+1}\text{\qquad and \qquad}\alpha_i = \gamma_i^{[d_{i-1}-d_i+1]}\tilde{\alpha}_i,
\end{equation*}
it follows that
\begin{equation}
\label{eq1}
 \gamma_i^{[d_{i-1}-d_i+1]}\gamma_{i-1}\tilde{\alpha}_i \tilde{R}_{i-1} = {Q}_i\gamma_i\tilde{R}_i +  {\beta}_i\gamma_{i+1}\tilde{R}_{i+1}.
\end{equation}
For the first summand on the right hand side, Lemma~\ref{quothm} yields
\begin{equation}
\label{eq2}
 {Q}_i\gamma_i = \gamma_i^{[d_{i-1}-d_i+1]}\gamma_{i-1}\tilde{Q}_i.
\end{equation}
For the second summand, observe that since $\gamma_i\lc(\tilde{R}_i)$ equals $\lc(R_i)$, we have that $\psi_i$ equals $\tilde{\psi}_i$ for all $1\leq i\leq\ell$. Thus
\begin{equation}
\label{eq3}
 {\beta}_i\gamma_{i+1} = \gamma_i^{[d_{i-1}-d_i+1]}\gamma_{i-1}\tilde{\beta}_i.
\end{equation}
The proof is concluded by combining \eqref{eq1}, \eqref{eq2} and \eqref{eq3} and dividing the resulting equation by $\gamma_i^{[d_{i-1}-d_i+1]}\gamma_{i-1}$ from the left.\qed
\end{proof}

Two possible choices for $\prs{\gamma}{i}{\ell+1}$ were presented in Corollary~\ref{cor1} and \ref{cor2}. The computation of~$\gamma_i$ in Corollary~\ref{cor1} is straightforward, but in Corollary~\ref{cor2}, the essential part of $\mathcal{I}$ (the ideal generated by $A$ and $B$) at order~$d_A+d_B$ is usually not known. A simple heuristic can solve this problem in most cases: As was shown in Lemma~\ref{lcthm}, the essential part of $\mathcal{I}$ at order~$d_A+d_B$ appears in a shifted version in the leading coefficient of every nonzero ideal element with order less than or equal to $d_A+d_B$. In particular it is contained in $\lc(A)$ and~$\lc(B)$. Thus, if $t$ is the essential part of~$\mathcal{I}$ at order $d_A+d_B$, we have
\begin{equation}
\label{eq5}
\sigma^{d_A}(t) \mid \gcd(\lc(A),\sigma^{d_A-d_B}(\lc(B))) 
\end{equation}
and in most cases, we not only have divisibility but equality. In fact, in all the examples we looked at that came from combinatorics or physics, this guess for the essential part turned out to be correct.

\begin{example}[Example~\ref{knuthex} cont.]
 We now use Theorem~\ref{newabs} and Corollaries~\ref{cor1}~and~\ref{cor2} to compute new PRSs of $A$ and~$B$ as in Example~\ref{mainex}. The essential part of $\mathcal{I}$ at order~$d_A+d_B$ is $(n+3)$, so $\sigma^{d_A}(n+3) = (n+17)$, which is also the guess given by the right hand side of \eqref{eq5}. Applying Corollary \ref{cor1} yields the factors
\[\gamma_2 = n+17,\quad \gamma_3 = n+18,\quad\dots\quad \gamma_i = n+16+i-1,\quad\dots\]
whereas Corollary \ref{cor2} gives
\[\gamma_2 = (n + 16)^{[2]},\quad \gamma_3 = (n + 15)^{[4]},\quad \dots\quad \gamma_i = (n+16-i+2)^{[2(i-1)]},\quad\dots\]
 The improvements from Corollary~\ref{cor1}  are marginal, while the degrees in the improved PRS with the results from Corollary~\ref{cor2} are equal to the degrees in the primitive PRS, except for the very last step:
 \begin{center}
\begin{tabular}{l|c|c|c|c|c|c|c}
PRS & $R_2$ & $R_3$ & $R_4$ & $R_5$ & $R_6$ & $R_7$ & $R_8$\\
\hline
 subresultant & 11 & 16 & 21 & 26 & 31 & 36 & 41\\
\hline
 improved (Cor.~\ref{cor1}) & 10 & 15 & 20 & 25 & 30 & 35 & 40\\
\hline
 improved (Cor.~\ref{cor2})& 9 & 12 & 15 & 18 & 21 & 24 & 27\\
\hline
 primitive & 9 & 12 & 15 & 18 & 21 & 24 & 21\\
\end{tabular}\\
\vspace{0.1cm}\scriptsize{Table 2: Maximal coefficient degrees for the subresultant, improved and primitive PRS.}
\end{center}
\end{example}

\begin{example}
 Although the remainders in the PRS based on Corollary~\ref{cor2} are usually primitive when starting from randomly generated operators or operators that come from some applications, it is not guaranteed that this is always the case. As an example, consider 
\begin{alignat*}2
 \text{\rlap{$A,B\in\mathbb{Q}[y][x],$}}\ \ & \\
 A &= x^4+yx^2+yx+y,\\
 B &= x^3+yx^2.
\end{alignat*}
The second subresultant of $A$ and~$B$ is $\pres_2(A,B)=(y+y^2)x^2+yx+y$, so $\cont(\pres_2(A,B))=y$, but in the improved PRS, no content will be found. 

As mentioned, it may also happen that the guess for the essential part of $\mathcal{I}$ at order $d_A+d_B$ is too large, for example:
\begin{alignat*}2
 \text{\rlap{$A,B\in\mathbb{Q}(y)[D,1,\tfrac{d}{dx}],$}}\ \ & \\
 A&=(y + 1)D^4 + D^3 + D^2 + yD + 1,\\
 B&=(y + 1)D^3 + D^2 + 1.
\end{alignat*}
Here, $\operatorname{cont}(R_3$) in the subresultant PRS is $(y+1)$, but a factor $(y+1)^2$ is predicted. The mistake in predicting the essential part can be noticed on the fly during the execution of the algorithm as soon as a remainder with coefficients in $\mathbb{Q}(y)$ appears. It is then possible to either switch to another PRS or to refine the guess of the essential part. One strategy to do so is to remove all the factors from the guess that could be responsible for the appearance of denominators. Let $t$ be the guess for the essential part of $\mathcal{I}$ at $d_A+d_B$ and let $c$ be the non-trivial common denominator of the coefficients of a remainder $R_i$ in the improved PRS. Furthermore let $M$ be the set of all integers~$m$ such that $\gcd(\sigma^m(c),t)\neq 1$. Update $R_i$, $\gamma_i$ and $t$ with
\begin{alignat*}2
 & R_i & \leftarrow &\ cR_i,\\
 & \gamma_i & \leftarrow &\ \frac{\gamma_i}{c},\\
 & t & \leftarrow &\ \frac{t}{\gcd(t,\prod_{m\in M}{\sigma^m(c)})},\\
 & \gamma_{i+1}\quad & \leftarrow &\ \sigma^{d_{i}-d_B}(t)^{[d_A+d_B-2d_{i}+1]},
\end{alignat*}
and continue the computation with these new values. For differential operators in $\set C(y)[D;1,\frac{d}{dy}]$, we have $M=\{0\}$ and for recurrence operators in $\set C(n)[S_n;s_n,0]$, $M$ contains all the integer roots of the polynomial $\operatorname{res}_n(c(n+m),t)$.
\end{example}

\begin{example}
 We can guess two operators $A$ and $B$ in $\set Q[n][S;s_n,0]$ of order $d_A=16$, $d_B=14$, resp. that annihilate the sequence
\[t_n = (7n^3+5n^2+n+1)^7({(n+1/7)^{\overline{12}})}^7\frac{(2n)!^3}{(3n)!^2}.\]
The GCRD of $A$ and $B$ is of order $1$ and the essential part of $\mathcal{I}$ at $d_A+d_B$ is of degree $4$. The essential part of $\mathcal{I}$ at order $11$, however, is of degree $11$, so here we are in the rare case where the essential part of $\mathcal{I}$ at order $d_A+d_B$ is only contained but not equal to the essential part at lower orders. Formula \eqref{eq5} only predicts the essential part of $\mathcal{I}$ at order  $d_A+d_B$ and during the GCRD computation, content that comes from lower order essential parts emerges
 \begin{center}
\begin{tabular}{l|c|c|c|c|c|c|c}
PRS & $R_2$ & $R_3$ & $R_4$ & $R_5$ & $R_6$ & $R_7$ & $R_8$\\
\hline
 improved (Cor.~\ref{cor2})& 31 & 44 & 57 & 70 & 83 & 96 & 109 \rlap{\quad}\\
\hline 
 primitive & 31 & 44 & 50 & 56 & 62 & 68 & 74\\
\end{tabular}\\
{\vspace{0.1cm}\scriptsize{Table 3: Maximal coefficient degrees for the first few remainders in the improved and primitive PRS.}}
\end{center}
It is possible to guess the essential part of $\mathcal{I}$ at lower orders and then use Theorem~\ref{mainthm} to get the primitive remainders, but like in the direct computation of the primitive PRS, GCD computations in the base ring would be necessary after each division step.
\end{example}

 As another consequence of Theorem~\ref{mainthm}, we can give a new bound for the coefficient degrees of the primitive PRS in terms of the essential parts of the left ideal generated by $A$ and $B$. 
 \begin{theorem}
  Let $\prs{R}{0}{\ell+1}$ be the primitive PRS of $A$ and~$B$. Fix $i\in\{0,\dots,\ell\}$ and let $b\in\set N$ be such that $\max_{k\in\{0,\dots,d_B-d_{i-1}\}}(\lVert\D^kA\rVert)\leq b$ and $\max_{k\in\{0,\dots,d_A-d_{i-1}\}}(\lVert \D^kB\rVert)\leq b$. If $t_j$ denotes the essential part of $\mathcal{I}$ at order $j\in\set N$, then
  \begin{alignat*}2
   \lVert R_i\rVert & \leq (d_A+d_B-2(d_{i-1}-1))b - \sum_{j=d_{i-1}}^{\mathclap{d_A+d_B-d_{i-1}+1}}{\deg(t_j)}
  \end{alignat*}
 \end{theorem}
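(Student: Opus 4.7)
My plan is to identify $R_i$, up to a unit of $\k$, with the $\k$-primitive part of a specific subresultant of $A$ and~$B$, and then to combine an elementary Leibniz bound on the defining determinants with the content divisibility provided by Theorem~\ref{mainthm}.

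\emph{Step 1 (identification of $R_i$).} The subresultant PRS and the primitive PRS of $A$ and~$B$ have the same degree sequence, and any two PRSs of $A$ and~$B$ with a common degree sequence produce remainders that are $\qk$-associates (this follows by induction on the PRS step from the pseudo-remainder relation). Since the primitive PRS remainder $R_i$ is content-free and lies in $\alg$, the $\qk$-scaling factor between $R_i$ and the subresultant PRS remainder at step $i$ must be, up to a unit of $\k$, the inverse content of the latter. Combining this with the block-structure argument from the proof of Corollary~\ref{cor1}---which identifies the subresultant PRS remainder at step $i$ with $\pm\pres_{d_{i-1}-1}(A,B)$---yields
\[
\lVert R_i\rVert \;=\; \lVert\pres_{d_{i-1}-1}(A,B)\rVert - \deg\cont(\pres_{d_{i-1}-1}(A,B)).
\]

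\emph{Step 2 (bounding the subresultant).} Set $\Delta:=d_A+d_B-2(d_{i-1}-1)$. By the determinantal definition, every coefficient of $\pres_{d_{i-1}-1}(A,B)$ is the determinant of a $\Delta\times\Delta$ submatrix of $\syl(A,B)$ whose rows are the coefficient vectors of $\D^k A$ for $k\in\{0,\dots,d_B-d_{i-1}\}$ and of $\D^k B$ for $k\in\{0,\dots,d_A-d_{i-1}\}$. The hypothesis on $b$ bounds every entry of such a matrix in degree by $b$, so the Leibniz expansion of the determinant gives $\lVert\pres_{d_{i-1}-1}(A,B)\rVert\le\Delta b = (d_A+d_B-2(d_{i-1}-1))b$. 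Theorem~\ref{mainthm}, applied with subresultant index $d_{i-1}-1$, then supplies a lower bound of the form $\sum_j\deg t_j$ on $\deg\cont(\pres_{d_{i-1}-1}(A,B))$, where $j$ ranges over the essential-part indices indicated in the statement. Subtracting from the Leibniz bound produces the claimed inequality.

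\emph{Main obstacle.} The delicate point is Step 1: one must verify that within any $\qk$-associate class of operators in $\alg$, the content-free representative is unique up to a unit of $\k$, and then combine this with the PRS proportionality principle and the subresultant block structure. Once this identification is secured, the remainder of the argument is a transparent composition of the standard Leibniz bound for polynomial determinants with the divisibility result already established in Theorem~\ref{mainthm}.
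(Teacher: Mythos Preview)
Your argument is essentially the same as the paper's: the paper's one-line proof invokes Hadamard's inequality (your Leibniz bound), the subresultant block structure (your identification of $R_i$ with the primitive part of $\pres_{d_{i-1}-1}(A,B)$), and the content divisibility result. The only cosmetic difference is that you appeal to Theorem~\ref{mainthm} directly rather than to Corollary~\ref{cor2}; since the statement involves the general essential parts $t_j$ rather than a single $t$, your citation is in fact the more natural one.
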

\begin{proof}
 The bound follows directly from Hadamard's inequality, the subresultant block structure and Corollary~\ref{cor2}. \qed
\end{proof}

\section*{Acknowledgements}
I would like to thank Ziming Li and Manuel Kauers for their helpful comments and support during our personal communication. Also, I thank the reviewers for their careful reading of this article and for pointing out mistakes and shortcomings in the draft.

\bibliographystyle{model2-names}
\bibliography{main}
\end{document}